    \def\UseBibLatex{1}
\def\input@path{{styles/}{../styles/}}
\providecommand{\BibLatexMode}[1]{}
\providecommand{\BibTexMode}[1]{#1}
  \renewcommand{\BibLatexMode}[1]{}
  \renewcommand{\BibTexMode}[1]{#1}
  \renewcommand{\BibLatexMode}[1]{#1}
  \renewcommand{\BibTexMode}[1]{}
\definecolor[named]{Blue}{cmyk}{1,0.1,0,0.1}
\definecolor[named]{Yellow}{cmyk}{0,0.16,1,0}
\definecolor[named]{Orange}{HTML}{EC9933}
\definecolor[named]{Red}{HTML}{FF4D4D}
\definecolor[named]{DarkRed}{HTML}{903838}
\definecolor[named]{LightBlue}{cmyk}{0.49,0.01,0,0}
\definecolor[named]{Green}{HTML}{248a6d}
\definecolor[named]{Purple}{cmyk}{0.55,1,0,0.15}
\definecolor[named]{DarkBlue}{HTML}{3D7391}
\definecolor[named]{LightGrey}{HTML}{829197}
\newlist{compactenumA}{enumerate}{5}%
\setlist[compactenumA]{topsep=0pt,itemsep=-1ex,partopsep=1ex,parsep=1ex,%
   label=(\Alph*)}%
\newlist{compactenuma}{enumerate}{5}%
\setlist[compactenuma]{topsep=0pt,itemsep=-1ex,partopsep=1ex,parsep=1ex,%
   label=(\alph*)}%
\newlist{compactenumI}{enumerate}{5}%
\setlist[compactenumI]{topsep=0pt,itemsep=-1ex,partopsep=1ex,parsep=1ex,%
   label=(\Roman*)}%
\newlist{compactenumi}{enumerate}{5}%
\setlist[compactenumi]{topsep=0pt,itemsep=-1ex,partopsep=1ex,parsep=1ex,%
   label=(\roman*)}%
\newlist{compactitem}{itemize}{5}%
\setlist[compactitem]{topsep=0pt,itemsep=-1ex,partopsep=1ex,parsep=1ex}%
\newlist{ProblemList}{enumerate}{5}%
\setlist[ProblemList]{topsep=5pt,%
   partopsep=0.03ex,parsep=0.365ex,%
   resume=problemlist,leftmargin=0.6in,%
   label=\bf\color{DarkRed}{Prob. \arabic*}:, ref=\arabic*}
\newcommand{\HLinkShort}[2]{\hyperref[#2]{#1\ref*{#2}}}
\newcommand{\HLink}[2]{\hyperref[#2]{#1~\ref*{#2}}}
\newcommand{\HLinkPage}[2]{\hyperref[#2]{#1~\ref*{#2}%
      $_\text{p\pageref{#2}}$}}
\newcommand{\HLinkPageOnly}[1]{\hyperref[#1]{Page~\refpage*{#1}%
      $_\text{p\pageref{#1}}$}}
\newcommand{\HLinkSuffix}[3]{\hyperref[#2]{#1\ref*{#2}{#3}}}
\newcommand{\HLinkPageSuffix}[3]{\hyperref[#2]{#1\ref*{#2}%
      #3$_\text{p\pageref{#2}}$}}
\newcommand{\figlab}[1]{\label{fig:#1}}
\newcommand{\figref}[1]{\HLink{Figure}{fig:#1}}
\newcommand{\seclab}[1]{\label{sec:#1}}
\providecommand{\deflab}[1]{\label{def:#1}}
\newcommand{\defref}[1]{\HLink{Definition}{def:#1}}
\newcommand{\apndlab}[1]{\label{apnd:#1}}
\newcommand{\apndref}[1]{\HLink{Appendix}{apnd:#1}}
\newcommand{\defrefY}[2]{\hyperref[def:#2]{#1}}
\newcommand{\lemlab}[1]{\label{lemma:#1}}
\newcommand{\lemref}[1]{\HLink{Lemma}{lemma:#1}}%
\providecommand{\algref}[1]{}%
\renewcommand{\algref}[1]{\HLink{Algorithm}{Algorithm:#1}}%
\newcommand{\thmlab}[1]{{\label{theo:#1}}}
\newcommand{\thmref}[1]{\HLink{Theorem}{theo:#1}}
\providecommand{\eqlab}[1]{}%
\renewcommand{\eqlab}[1]{\label{equation:#1}}
\newcommand{\Eqref}[1]{\HLinkSuffix{Eq.~(}{equation:#1}{)}}
\definecolor{blue25}{rgb}{0,0,0.7}
\providecommand{\emphic}[2]{%
   \textcolor{blue25}{%
      \textbf{\textup{\emph{#1}}}}%
   \index{#2}}
\providecommand{\emphi}[1]{\emphic{#1}{#1}}
   \numberwithin{figure}{section}%
   \numberwithin{table}{section}%
   \numberwithin{equation}{section}%
\newcommand{\Term}[1]{\textsf{#1}\xspace}%
\providecommand{\Mh}[1]{#1}%
\newcommand{\pth}[1]{\mleft({#1}\mright)}
\providecommand{\set}[1]{\mleft\{#1\mright\}}
\renewcommand{\set}[1]{\mleft\{#1\mright\}}
\newcommand{\Set}[2]{\left\{ #1 \;\middle\vert\; #2 \right\}}
\newcommand{\ceil}[1]{\left\lceil {#1} \right\rceil}
\newcommand{\floor}[1]{\left\lfloor {#1} \right\rfloor}
\newcommand{\etal}{{e{}t~a{}l.}\xspace}
\newcommand{\cardin}[1]{\left\vert {#1}  \right\vert}
\newcommand{\eps}{\varepsilon}
\newcommand{\KS}{\Mh{K}}%
\newcommand{\PS}{\Mh{P}}%
\newcommand{\HCX}[1]{\ensuremath{[0,1]^{#1}}}
\newcommand{\Body}{\Mh{\Xi}}
\newcommand{\BodyC}{\Mh{\mathcal{C}}}%
\newcommand{\volX}[1]{\mathsf{vol}\pth{#1}}
\newcommand{\Ball}{\mathsf{b}}%
\newcommand{\EC}{\Mh{\mathcal{E}}}%
\renewcommand{\flat}{\Mh{\varphi}}%
\newcommand{\AT}{\mathbf{M}}%
\newcommand{\Ground}{{\ensuremath{\mathcal{U}}} }
\newcommand{\Ranges}{{\mathcal{R}}}%
\newcommand{\RangesC}{\Mh{\mathcal{C}}}%
\newcommand{\Samp}{\Mh{\mathsf{S}}}
\newcommand{\VC}{\Term{VC}}%
\newcommand{\RangeSpace}{\Mh{\mathsf{X}}}%
\newcommand{\range}{\nu}%
\newcommand{\Matousek}{Matou{\v{s}}ek\xspace}
\newcommand{\Barany}{B{\'a}r{\'a}n{}y\xspace}%
\providecommand{\TPDF}[2]{\texorpdfstring{#1}{#2}}
\newcommand{\kenet}[2]{$(#1, #2)$-net\xspace}
\newcommand{\knet}[1]{\kenet{#1}{\eps}}
\newcommand{\enet}[1]{\kenet{k}{#1}}
\newcommand{\net}{\kenet{k}{\eps}}
\renewcommand{\th}{th\xspace} \newcommand{\bin}{\Term{bin}}%
\newcommand{\rev}{\Term{rev}}%
\providecommand{\br}{\Term{b{}r}}%
\renewcommand{\br}{\Term{b{}r}}%
\newcommand{\pp}{\Mh{p}}%
\renewcommand{\Re}{\mathbb{R}}%
\newcommand{\naive}{naive\xspace}%
\newcommand{\lenX}[1]{\left\|#1 \right\|}
\newcommand{\rv}{\Mh{\mathcalb{r}}}%
\providecommand{\IntRange}[1]{\mleft\llbracket #1 \mright\rrbracket}
\newcommand{\IRX}[1]{\IntRange{#1}}%
\theoremstyle{plain}%
\newtheorem{theorem}{Theorem}[section]
\newtheorem{lemma}[theorem]{Lemma}
\theoremstyle{plain}%
\newtheorem*{remark:unnumbered}[theorem]{Remark}%
\newtheorem{remark}[theorem]{Remark}%
\newtheorem{definition}[theorem]{Definition}
\newcommand{\myqedsymbol}{\rule{2mm}{2mm}}
\theoremstyle{nonumberplain}%
\newtheorem{proof}{Proof:}%
\newlength{\arxivwidth}
\newcommand{\rankX}[1]{\mathsf{rank}\pth{#1}}%
\newcommand{\BB}{\mathcal{B}}
\newcommand{\atgen}{\symbol{'100}}
\newcommand{\SarielThanks}[1]{\thanks{Department of Computer Science;
      University of Illinois; 201 N. Goodwin Avenue; Urbana, IL,
      61801, USA; {\tt sariel\atgen{}illinois.edu}; {\tt
         \url{http://sarielhp.org/}.} #1}}
\newcommand{\MitchellThanks}[1]{%
   \thanks{%
      Department of Computer Science;
      University of Illinois; 201 N. Goodwin Avenue; Urbana, IL,
      61801, USA; {\tt mfjones2\atgen{}illinois.edu}; {\tt
         \url{http://mfjones2.web.engr.illinois.edu/}.} #1}}
\begin{document}

   \title{A Note on Stabbing Convex Bodies with Points, Lines, and
      Flats}%
   \date{\today}

   \author{%
      Sariel Har-Peled%
      \SarielThanks{Work on this paper was partially supported by NSF AF
         award CCF-1907400.}%
      \and%
      Mitchell Jones%
      \MitchellThanks{Work on this paper was partially supported by NSF
         AF award CCF-1907400.}%
   }
   
   \maketitle
   \begin{abstract}
   Consider the problem of constructing weak $\eps$-nets where the
   stabbing elements are lines or $k$-flats instead of points. We
   study this problem in the simplest setting where it is still
   interesting---namely, the uniform measure of volume over the
   hypercube $[0,1]^d\bigr.$. Specifically, a $(k,\eps)$-net is a set
   of $k$-flats, such that any convex body in $[0,1]^d$ of volume
   larger than $\eps$ is stabbed by one of these $k$-flats.  We show
   that for $k \geq 1$, one can construct $(k,\eps)$-nets of size
   $O(1/\eps^{1-k/d})$. We also prove that any such net must have size
   at least $\Omega(1/\eps^{1-k/d})$. As a concrete example, in three
   dimensions all $\eps$-heavy bodies in $[0,1]^3$ can be stabbed by
   $\Theta(1/\eps^{2/3})$ lines.  Note, that these bounds are
   \emph{sublinear} in $1/\eps$, and are thus somewhat surprising.
   The new construction also works for points providing a weak
   $\eps$-net of size $O(\tfrac{1}{\eps}\log^{d-1} \tfrac{1}{\eps} )$.
\end{abstract}

\maketitle

\section{Introduction}

\noindent%
\textbf{Notations.} Throughout, we use $O_d$, $\Omega_d$, and
$\Theta_d$ to hide constants depending on the dimension $d$. We use
$\IRX{n}$ to denote the set $\{1,\ldots, n\}$.

\paragraph*{Range spaces and $\eps$-nets.}  %
A \emphi{range space} is a pair $\RangeSpace = (\Ground, \Ranges)$,
where $\Ground$ is the \emphi{ground set} (finite or infinite) and
$\Ranges$ is a (finite or infinite) family of subsets of $\Ground$.
The elements of $\Ranges$ are \emphi{ranges}.

Suppose that $\Ground$ is a finite set. For a parameter
$\eps \in (0,1)$, a subset $\Samp \subseteq \Ground$ is an
\emphi{$\eps$-net} for the range space $\RangeSpace$, if for every
range $\range \in \Ranges$ with
\begin{math}
    \cardin{\range \cap \Ground} \geq \eps\cardin{\Ground}
\end{math}
has $\range \cap \Samp \neq \varnothing$. The $\eps$-net theorem of
Haussler and Welzl \cite{hw-ensrq-87} implies the existence of
$\eps$-nets of size $O(\delta \eps^{-1} \log \eps^{-1})$, where
$\delta$ is the \VC{} dimension of the range space $\RangeSpace$. The
use of $\eps$-nets is widespread in computational geometry
\cite{m-ldg-02,h-gaa-11}.

\paragraph*{Weak \TPDF{$\eps$}{eps}-nets.} Consider the range space
$(\PS, \RangesC)$, where $\RangesC$ is the collection of all compact
convex bodies in $\Re^d$ and $\PS \subset \Re^d$ is a set of
points. This range space has unbounded \VC{} dimension---the standard
$\eps$-net constructions do not work in this case. The notion of
\emphi{weak $\eps$-nets} bypasses this issue by allowing the net
$\Samp$ to use points outside of $\PS$. Specifically, any convex body
$\Body$ that contains at least $\eps \cardin{\PS}$ points of $\PS$
must contain a point of $\Samp$.  The first construction of weak
$\eps$-nets in the plane was due to \Barany{} \etal \cite{bfl-nhp-90}
of size $O(1/\eps^{1026})$. For all $d \geq 1$, Alon \etal
\cite{abfk-pswnch-92} were the first to construct weak $\eps$-nets in
$\Re^d$ whose size was bounded in terms of $1/\eps$ and $d$.  In 1995,
Chazelle \etal \cite{ceggsw-ibwen-95} improved this bound to
$O \bigl( \eps^{-d} \log^{\zeta(d)} \eps^{-1} \bigr)$, where
$\zeta(d) = O(2^{d}(d-1)!)$.  In 2004, \Matousek and Wagner
\cite{mw-ncwen-04} gave an improved construction of weak $\eps$-nets
of size $O_d(\eps^{-d} \log^{f(d)} \eps^{-1})$, where
$f(d) = O(d^2 \log d)$.  Recently, Rubin
\cite{r-ibwen-18,r-sbwenhd-21} gave an improved bound, showing the
existence of weak $\eps$-nets of size
$O_d(\eps^{-(d - 0.5 + \alpha)})$ for arbitrarily small $\alpha >
0$. For more detailed history of the problem, see the introduction of
Rubin \cite{r-ibwen-18,r-sbwenhd-21}. As for a lower bound, Bukh \etal
\cite{bmn-lbwensc-09} gave constructions of point sets for which any
weak $\eps$-net must have size
$\Omega_d(\eps^{-1} \log^{d-1} \eps^{-1})$.  Closing this gap remains
a major open problem.  See \cite{mv-eaen-17} for a recent survey of
$\eps$-nets and related concepts.

\paragraph*{\net{}s and uniform measure.}
A natural extension of weak $\eps$-nets is to allow the net $\Samp$ to
contain other geometric objects. Given a collection of $n$ points
$\PS \subset \Re^d$ and a parameter $k$, $0 \leq k < d$, we define a
(weak) \net to be a collection of $k$-flats $\Samp$ such that if
$\Body$ is a convex body containing at least $\eps n$ points of $\PS$,
then there is a $k$-flat in $\Samp$ intersecting $\Body$. Note that
\knet{0}{}s are exactly weak $\eps$-nets.

In general, one would expect that as $k$ increases, the size of the
\net shrinks. For example, a \knet{1} for a collection of points in
$\Re^3$ can be constructed by projecting the points down onto the
$xy$-plane and applying Rubin's construction in the plane to obtain a
weak $\eps$-net $\Samp$ of size $O(\eps^{-3/2 - \alpha})$
\cite{r-ibwen-18}. Lifting $\Samp$ up back into three dimensions
results in a \knet{1} of the same size, which is smaller than the best
known weak $\eps$-net size in $\Re^3$
\cite{mw-ncwen-04,r-ibwen-18,r-sbwenhd-21}.  However, one might expect
that a \knet{1} of even smaller size is possible in $\Re^3$, as this
construction uses a set of parallel lines (i.e., one would expect the
lines in an optimal net to have multiple orientations).

Here, we study an even simpler version of the problem, where the
ground set is the hypercube $\HCX{d}$. In particular, for
$\eps \in (0,1)$ and $0 \leq k < d$, we are interested in computing
the smallest set $\KS$ of $k$-flats, such that if $\Body$ is a convex
body with $\volX{\Body \cap \HCX{d}} \geq \eps$, then there is a
$k$-flat in $\KS$ which intersects $\Body$. In the following, the set
$\KS$ is a \emphi{\net for volume measure}.  We note that $\HCX{d}$
can be replaced with any arbitrary compact convex body in the
definition -- the size of the \net increases by roughly a factor of
$d^{O(d)}$, see \lemref{any-convex}.

\paragraph*{Deterministic and explicit constructions of $\eps$-nets.}
The randomized algorithm for computing $\eps$-nets, implied by the
$\eps$-net theorem, can be derandomized, but the resulting running
time is exponential in the dimension.  These algorithms work by
repeatedly halving the input point set, using deterministic
discrepancy constructions, until the set is of the desired size
\cite{m-gd-99,c-dmr-01}.

It is an open problem to compute $\eps$-nets in deterministic
polynomial time, in the dimension and $1/\eps$, even for special
cases.  Previous such work on explicit efficient constructions of weak
(and regular) $\eps$-nets include:
\begin{compactenumI}
    \smallskip%
    \item \textsf{Axis-parallel boxes}.  Explicit constructions of
    \knet{0}{}s for volume measure for axis-parallel boxes in $\Re^d$,
    and is briefly mentioned in \cite{bmn-lbwensc-09}. In this case,
    one can construct a \knet{0}{} for volume measure of size
    $2^{O(d \log d)}/\eps$ using Van der Corput sets in two
    dimensions, and Halton-Hammersely sets in higher dimensions.
    These constructions are essentially described in \cite{m-gd-99}
    (in the context of low-discrepancy point sets), the minor
    modifications required in the proofs are described in
    \apndref{0-net-boxes}.

    \smallskip%
    \item \textsf{Grid points and axis-parallel boxes.}  Linial \etal
    \cite{llsz-ecshs-97} studied the problem of constructing explicit
    $\eps$-nets for axis-parallel boxes, where the ground set is
    $\IRX{m}^d$, for some integer $m>0$. The net size is
    $\bigl((m \log d)/\eps\bigr)^{O(1)}$, and the construction time is
    $(md/\eps)^{O(1)}$.

    \smallskip%
    \item \textsf{Halfplanes for vertices of the hypercube, and
       hypersphere.} Rabani and Shpilka \cite{rs-ecsen-10} showed that
    for $\{0,1\}^d$, and halfspaces, one can compute an $\eps$-net of
    size $(d/\eps)^{O(1)}$ (where the constant is dimension
    independent).

    For $\eps = \exp\bigl( -O(\sqrt{n})\bigr)$, they also show a
    construction for volume measure on the hypersphere, for
    halfspaces, with a similar upper bound.
\end{compactenumI}

\subsection{Our results \& paper organization}

First, we show that any \net for volume measure must have size
$\Omega_d(1/\eps^{1-k/d})$ (\lemref{lb-k}). Perhaps surprisingly, we
give a relatively simple construction of \net{}s for volume measure of
size $O_d(1/\eps^{1-k/d})$ for $k \geq 1$ (\thmref{ke-nets-opt}). For
$k=0$, when using points, the same construction works, but the net
size increases to
$O_d\bigl(\tfrac{1}{\eps}\log^{d-1}\tfrac{1}{\eps}\bigr)$.  As far as
the authors are aware, this particular problem has not been addressed
before.

Note that for the case of points and volume measure on the hypercube,
it is enough to build a (regular) $O_d(\eps)$-net for ellipsoids (see
\lemref{reduce-to-pts} below).  In particular, applying the known
deterministic algorithm for computing $\eps$-nets
\cite{m-gd-99,c-dmr-01}, it is not clear what the generated $\eps$-net
is, without running this construction algorithm outright (which seems
quite challenging). In contrast, our algorithm enables us to output
the $i$\th point in the computed net in space and time polylogarithmic
in $O(1/\eps)$.

\section{Preliminaries}

\subsection{Formal definition of \TPDF{\net}{(k,epsilon)-net}}

\begin{definition}
    The affine hull of a point set
    $\PS = \{ \pp_1,\ldots, \pp_n \} \subseteq \Re^d$ is the set
    \begin{equation*}
        \Set{\Bigl.\smash{\sum\nolimits_{i} \alpha_i \pp_i}}
        {\forall i\,\,\, \alpha_i
           \in \Re\,  \text{ and }\, \smash{\sum\nolimits_{i}} \alpha_i = 1}.
    \end{equation*}
    For $k=0,\ldots, d-1$, a \emphi{$k$-flat} is the affine hull of a
    set of $k+1$ (affinely independent) points.
\end{definition}

Thus, a $0$-flat is a point and a $1$-flat is a line.

\begin{definition}
    For parameters $\eps \in (0,1)$ and $k \in \{0,1,\ldots, d-1\}$, a
    set $\KS$ of $k$-flats is a \emphi{\net for volume measure} if for
    any convex body $\Body \subseteq \Re^d$ with
    $\volX{\smash{\Body \cap \HCX{d}}\bigr.} \geq \eps$, there exists
    a flat $\flat \in \KS$ such that
    $\flat \cap \Body \neq \emptyset$.
\end{definition}

\subsection{Brunn-Minkowski inequality and unimodal functions}

The $\Body$ be a convex body in $\Re^d$. For a parameter
$\alpha \in \Re$, let $f(\alpha)$ denote the $(d-1)$-dimensional
volume of $\Body$ intersected with the hyperplane $x = \alpha$. The
Brunn-Minkowski inequality \cite{m-ldg-02,h-gaa-11} implies that the
function $g(\alpha) = f(\alpha)^{1/(d-1)}$ is concave (over the range
where it is not zero). In particular, $g$ is \emphi{unimodal}.
Namely, there exists a $\alpha \in \Re$ such that $g$ is
non-decreasing on $(-\infty, \alpha]$ and non-increasing on
$[\alpha, \infty)$.  As such, the function $f$ itself is unimodal.
See \figref{bm-unimodal}.

\begin{figure}[h]
    \centerline{\includegraphics[scale=0.7]{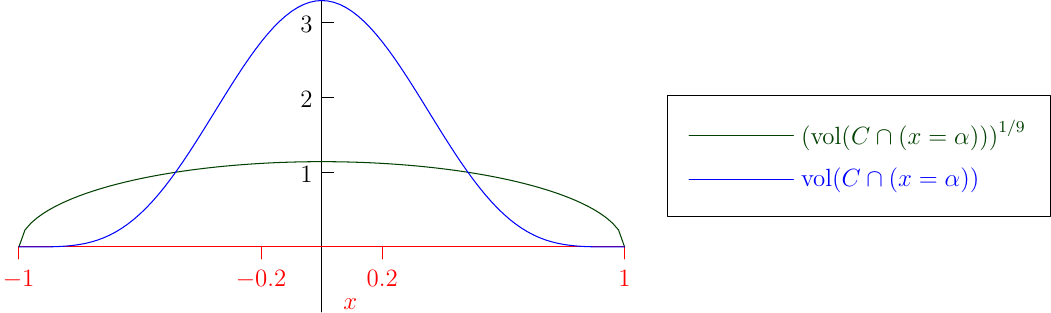}}

    \caption{The slice volume, and its $1/9$\th power, for the unit
       radius ball $\Body$ in $10$ dimensions. This is an example of
       the concavity implied by the Brunn-Minkowski inequality, which
       in turn implies that the slice function is unimodal.}
    \figlab{bm-unimodal}
\end{figure}

\subsection{Approximating convex bodies by ellipsoids}

\subsubsection{Replacing \TPDF{$\HCX{d}$}{[0,1]\^{}d} %
   with other convex bodies}
\lemlab{extensions}

\begin{lemma}
    \lemlab{any-convex}%
    Let $\BodyC$ be an arbitrary compact convex body in $\Re^d$ with
    non-empty interior. Suppose there is a \net for the uniform
    measure on $[0,1]^d$ of size $T(\eps, k, d)$.  For a given integer
    $k < d$ and $\eps \in (0,1)$, there is a collection of $k$-flats $\KS$, of
    size $T(\Omega_d(\eps), k, d))$, such that any convex body $\Body$
    with $\volX{\Body \cap \BodyC} \geq \eps\,\volX{\BodyC}$ is
    intersected by a $k$-flat in $\KS$.
\end{lemma}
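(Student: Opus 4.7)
The plan is a standard reduction via John's ellipsoid: affinely transform $\BodyC$ so it sits inside $[0,1]^d$ with volume bounded below by a dimensional constant, apply the assumed \net on $[0,1]^d$ at a slightly shrunken parameter $\Omega_d(\eps)$, and pull the $k$-flats back through the inverse map. Since affine maps preserve both $k$-flats and incidence, this turns a net for the hypercube into a net for $\BodyC$.

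First, I would invoke John's theorem to obtain an ellipsoid $\EC$ with $\EC \subseteq \BodyC \subseteq d\EC$. Choose an affine map $A$ sending $d\EC$ to the ball of radius $1/2$ centered at $(1/2,\ldots,1/2)$; then $A(\BodyC) \subseteq A(d\EC) \subseteq [0,1]^d$, while $A(\EC) \subseteq A(\BodyC)$ is a ball of radius $1/(2d)$, so
\[
    \volX{A(\BodyC)} \;\geq\; \volX{A(\EC)} \;=\; c_d
\]
for a constant $c_d > 0$ depending only on $d$. Note that $A$ is a genuine affine bijection because $\BodyC$ has non-empty interior, so $\EC$ is non-degenerate.

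Second, fix any convex body $\Body$ with $\volX{\Body \cap \BodyC} \geq \eps\,\volX{\BodyC}$. Because $A$ scales every $d$-volume by the same factor $\cardin{\det A}$, the ratio $\volX{\Body \cap \BodyC}/\volX{\BodyC}$ is invariant under $A$. Combining this with $A(\BodyC) \subseteq [0,1]^d$,
\[
    \volX{A(\Body) \cap [0,1]^d}
    \;\geq\;
    \volX{A(\Body) \cap A(\BodyC)}
    \;\geq\;
    \eps \, \volX{A(\BodyC)}
    \;\geq\;
    \eps c_d.
\]
Setting $\eps' = \eps c_d = \Omega_d(\eps)$, the hypothesis yields a \kenet{k}{\eps'} $\KS_0$ for $[0,1]^d$ of size $T(\eps', k, d)$. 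By the displayed inequality some $\flat \in \KS_0$ intersects $A(\Body)$, and since $A^{-1}$ maps $k$-flats to $k$-flats and preserves intersection, the pulled-back collection $\KS = \Set{A^{-1}(\flat)}{\flat \in \KS_0}$ has size $T(\Omega_d(\eps), k, d)$ and stabs every convex body $\Body$ with $\volX{\Body \cap \BodyC} \geq \eps\,\volX{\BodyC}$.

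There is no real obstacle here; the only point requiring care is that the invariant under $A$ is the volume \emph{ratio}, not absolute volume. This is exactly what makes John's sandwich the right tool: bounding $\volX{A(\BodyC)}$ below by a dimensional constant converts a relatively $\eps$-heavy body inside $\BodyC$ into an absolutely $\Omega_d(\eps)$-heavy body inside $[0,1]^d$, where the assumed net applies.
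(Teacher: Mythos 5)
Your proposal is correct and follows essentially the same route as the paper: John's ellipsoid, an affine map normalizing $\BodyC$ into $\HCX{d}$ with volume bounded below by a dimensional constant, applying the assumed net at parameter $\Omega_d(\eps)$, and pulling the $k$-flats back through $A^{-1}$. The only (cosmetic) difference is that you phrase John's theorem as $\EC \subseteq \BodyC \subseteq d\EC$ and map $d\EC$ to a ball in the cube, while the paper sandwiches $\AT(\BodyC)$ between two balls; the volume-ratio argument and the resulting constant are the same.
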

\begin{proof}
    Assume without loss of generality that $\Body \subseteq \BodyC$.
    John's ellipsoid theorem \cite{m-ldg-02} implies that there exists
    a non-singular affine transformation $\AT$, and a ball $\Ball$ of
    diameter $1$, such that
    $\Ball/d \subseteq \AT(\BodyC) \subseteq \Ball \subseteq \HCX{d}$,
    where $\Ball/d$ is $\Ball$ scaled by a factor of $1/d$. We have
    that $\volX{\Ball} = c_d 2^{-d}$, where
    $c_d=1/2^{\Theta(d\log d)}$ is the volume of the unit ball in
    $\Re^d$. Additionally,
    \begin{equation*}
        \volX{\smash{\HCX{d}}\bigr.}%
        =%
        1%
        =%
        \frac{2^d}{c_d}\,\volX{\Ball}
        = \frac{(2d)^d}{c_d}\,\volX{\Ball/d}
        \leq \frac{(2d)^d}{c_d}\,\volX{\AT(\BodyC)}.
    \end{equation*}

    Set $\delta = c_d/(2d)^d$. Compute a \enet{\eps'} $\KS$ for
    $\HCX{d}$, where $\eps' = \eps \delta$, which has size
    $T(\eps', k, d)$. We claim that this is a \net with respect to
    $\AT(\Body)$. Indeed, consider any convex body
    $\Body \subseteq \BodyC$ with
    $\volX{\Body \cap \BodyC} \geq \eps\,\volX{\BodyC}$. Since $\AT$
    preserves the ratios of volumes, we have that
    \begin{align*}
      \volX{\smash{\AT(\Body) \cap \HCX{d}} \bigr.}
      &\geq%
        \volX{\AT(\Body) \cap \AT(\BodyC)\bigr.}
        \geq%
        \eps\,\volX{\AT(\BodyC)\bigr.}
      \\&%
      \geq%
      \eps \delta \,\volX{\smash{\HCX{d}}\bigr.}
      =%
      \eps'\,\volX{\smash{\HCX{d}}\bigr.}.
    \end{align*}
    As such, one of the $k$-flats in $\KS$ intersects $\AT(\Body)$.
    After applying the inverse transformation $\AT^{-1}$ to each
    $k$-flat in $\KS$, one of the $k$-flats in $\AT^{-1}(\KS)$
    intersects $\Body$.
\end{proof}

\subsubsection{Its enough to hit ellipsoids}

\begin{lemma}
    \lemlab{reduce-to-pts}%
    Suppose there exists an \knet{k} for the volume measure over
    $\HCX{d}$ for ellipsoids of size $T(\eps,d)$.  Then one can
    construct a \knet{k} for the volume measure over $\HCX{d}$, for
    all convex bodies, of size $T(\eps/d^d, d)$.
\end{lemma}
\begin{proof}
    Consider any convex body $\Body$, such that
    $\volX{\smash{\Body \cap [0,1]^d}\bigr.} \geq \eps$.  Let $\EC$ be
    the ellipsoid of largest volume contained inside
    $\Body \cap [0,1]^d$. By John's ellipsoid theorem, we have that
    $\EC \subseteq \Body \subseteq d\EC$. In particular,
    \begin{equation*}
        \volX{\EC}%
        =%
        \frac{\volX{d\EC}}{d^d}%
        \geq%
        \frac{\volX{\Body}}{d^d}%
        \geq
        \frac{\eps}{d^d}.
    \end{equation*}
    As such, any $(k,\eps/d^d)$-net for ellipsoids is a \knet{k} for
    general convex bodies.
\end{proof}

\section{Lower bound}

\begin{lemma}
    \lemlab{lb-k}%
    For a parameter $\eps \in (0,1)$, and $k \in \{ 0,\ldots, d-1\}$,
    any \net for volume measure over $\HCX{d}$ must have size
    $\Omega_d(1/\eps^{1 - k/d})$.
\end{lemma}
\begin{proof}
    Let $\KS$ be a \net for volume measure.  For each $k$-flat
    $\flat \in \KS$, let $H(\flat,r)$ be the locus of points in
    $\HCX{d}$ within distance at most $r$ from $\flat$ (for $k=1$ in
    three dimensions, this is the intersection of $\HCX{3}$ and the
    cylinder with radius $r$ centered at the line $\flat$). Note that
    a ball $\Ball$ with center $c$ and radius $r$ intersects a
    $k$-flat $\flat$ if and only if $c \in H(\flat, r)$.

    Fix $r = (\eps/\mu)^{1/d}$, where $\mu$ is a constant to be
    determined shortly. We claim that by choosing $\mu$ appropriately,
    if $\KS$ is a \net for volume measure, then the collection of objects
    $\Set{H(\flat, r)}{\flat \in \KS}$ covers $\HCX{d}$. Indeed,
    suppose not. Then there exists a point $\pp \in \HCX{d}$ not
    covered by any of the objects $H(\flat, r)$. This implies that a
    ball $\Ball$ centered at $\pp$ with radius $r$ does not intersect
    any $k$-flat of $\KS$, and its volume is
    $c_d r^{d} = c_d\eps/\mu$, where $c_d$ is a constant that depends
    on $d$. Choose $\mu = c_d$ so that $\Ball$ has volume at least
    $\eps$, but does intersect any $k$-flat of $\KS$. A contradiction
    to the required net property.

    Hence, by the choice of $r$, any \net for volume measure must
    satisfy the condition that
    \begin{equation*}
        \Set{H(\flat, r)}{\flat \in \KS}        
    \end{equation*}
    covers $\HCX{d}$. For any $k$-flat $\flat$, we have
    $\eta = \volX{H(\flat, r)} = O_d(r^{d-k}) = O_d(\eps^{1 - k/d})$.
    Thus, to cover $\HCX{d}$, we have that
    $\cardin{\KS} \geq 1/\eta = \Omega_d(1/\eps^{1 - k/d})$.
\end{proof}

\section{Constructing \TPDF{\net{}s}{nets} for volume measure}
\seclab{ke-nets-opt}

\begin{figure}
    \noindent%
    \includegraphics[page=1,width=0.23\linewidth]{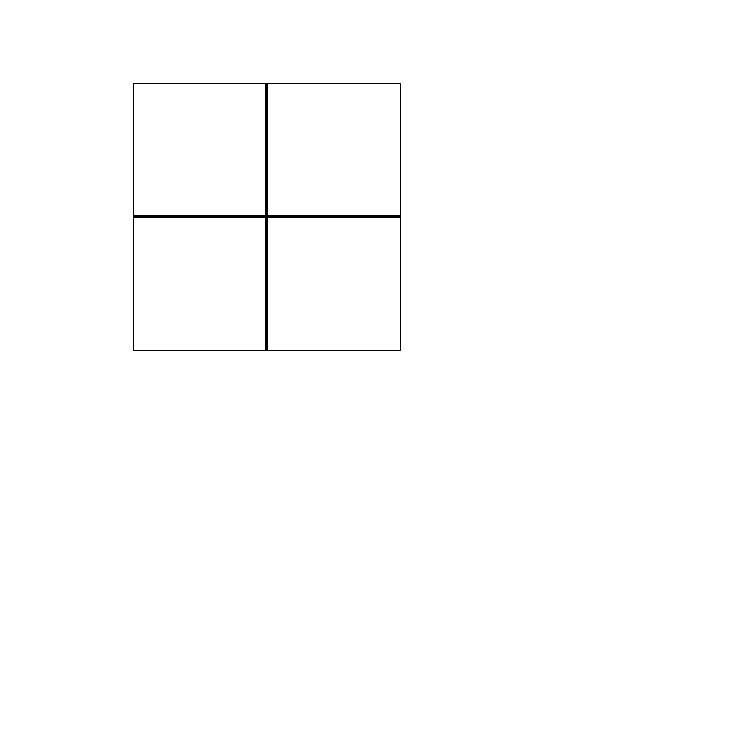}%
    \hfill%
    \includegraphics[page=2,width=0.23\linewidth]{figs/grid_grad}%
    \hfill%
    \includegraphics[page=3,width=0.23\linewidth]{figs/grid_grad}%
    \hfill%
    \includegraphics[page=4,width=0.23\linewidth]{figs/grid_grad}
    \captionof{figure}{The multi-level grid, and its associated
       lines.}%
    \figlab{grid}
\end{figure}

Here, we give a self-contained deterministic and explicit construction
of \net{}s for volume measure for $k \in \{0,\ldots, d-1\}$. The
constructed set size matches the lower bound of \lemref{lb-k} up to
constant factors for $k \geq 1$.

\subsection{Preliminaries}

For a number $x \in (0,1)$, lets its \emphi{rank} be the minimum $i$
such that $2^ix$ is an integer. For example, $\rankX{1/2} =1$, and
$\rankX{7/8} = 3$. Thus, any binary string
$s=s_1 \cdots s_i \in \{0,1\}^i$ of length $i$ that ends in $1$,
corresponds to the number $\sum_{k=1}^i s_k/2^k$ of rank $i$.  Let
\begin{equation*}
    \BB_{t} = \Set{ x / 2^t }{x \in \IRX{2^t-1}}
\end{equation*}
be the set of all numbers in $(0,1)$ of rank at most $t$. Observe that there
are exactly $2^{i-1}$ numbers in $\BB_{t}$ of rank $i$, for
$i=1, \ldots, t$.

\subsection{Construction}
The construction works recursively
on the dimension $d$.

\paragraph*{Base case: $d=1$ and $k=0$.}
Here a \knet{0} for volume measure of size $O(1/\eps)$ follows readily
by spreading $2 + \floor{1/\eps}$ points uniformly on the interval
$\HCX{1}=[0,1]$.

\paragraph*{Base case: $d=k+1$ and $k > 0$.}  Here a \knet{d-1} for
volume measure of size $d/\eps^{1/d} = O(d/\eps^{1 - k/d})$ follows
readily by overlaying a $d$-dimensional grid of size length
$\eps^{1/d}$ over $\HCX{d}$. Each cell in this grid has volume
$\eps$. Thus, the net consists of the hyperplanes forming the
grid\footnote{This requires the convex bodies under consideration to
   be closed.}.

\paragraph*{Induction: $d>k+1$.}
For $i=1,\ldots, d$, and $\varphi \in \BB_{\tau}$, where
\begin{equation}
    \tau = \ceil{\frac{1}{d}\log\frac{1}{\eps}} + 3\ceil{\log (3d)} + 1,
    \eqlab{tau:value}
\end{equation}
consider the hyperplane $h(i,\varphi) \equiv ( x_i = \varphi)$, and
let $\ell = \rankX{\varphi}$. We recursively construct a
\kenet{k}{\eps_\ell} for volume measure on $h(i,\varphi)$ (which lies
in $d-1$ dimensions), where
\begin{equation}
    \eps_\ell = \frac{2^{\ell}\eps}{4d}.
    \eqlab{eps:i}
\end{equation}
Thus, hyperplanes with rank $\ell$ have a finer net on them than
hyperplanes of rank $\ell+1$.  We collect all such $k$-flats built on
all of these hyperplanes of all ranks into a set $\KS$, which is the
desired \net.

See \figref{grid} for an illustration of the construction in two
dimensions.

\paragraph*{Intuition.}  The construction is based on
quadtrees. Starting with the entire cube $\HCX{d}$, we construct $d$
orthogonal hyperplanes which \emph{split} the cube into $2^d$ cubes of
side length $1/2$. We refer to such hyperplanes as \emph{splitting
   hyperplanes}. This splitting process is continued recursively. The
rank of a hyperplane is thus the level of the recursion when it is
being introduced.  All the cubes at the $i$\th level of the
construction have side length $1/2^i$ and they form a grid. The number
of cubes in this grid at the $i$\th level is $2^{di}$.  Observe,
that we recursively construct a net on each ``wall'' of a cell, where
the density of the net is coarser as we go down the recursion.

\subsection{Analysis}

\begin{lemma}
    \lemlab{upper:bound:k:non:zero}%
    For $k \in \IRX{d-1}$, the constructed \net{} for volume measure
    has size at most
    $\beta(d) /\eps^{1 - k/d} = O_d(1/\eps^{1 - k/d})$, where
    $\beta(d) = 2^{O(d-k)} d^{6(d-k-1)+1}$
\end{lemma}
\begin{proof}
    Let $T(\eps,d)$ denote the size of a \net for volume measure over
    $[0,1]^d$ constructed above. The proof is by induction on
    $d$. When $d = k + 1$, we have
    $T(\eps, k+1) \leq (k+1)/\eps^{1/(k+1)}$, by the base case
    described above. So assume $d \geq k+2$ and
    $T(\delta, d') \leq \beta(d')/\delta^{1 - k/d'}$ for all $d' < d$,
    where $\beta(d')$ is a function and $\beta(k+1)=k+1$.  We
    remind the reader that $\eps_i = {2^{i}\eps}/(4d)$ and
    \begin{math}
        \tau \leq \frac{1}{d}\log\frac{1}{\eps} + 3\log d + 8.
    \end{math}
      By the inductive hypothesis, the above construction produces a
    \net of size
    \begin{align*}
      \cardin{\KS}
      &%
        \leq%
        \sum_{i=1}^\tau d2^{i-1}T(\eps_i, d-1)
        \leq%
        d \sum_{i=1}^{\tau} \frac{2^{i-1} \beta(d-1)}{\eps_{i}^{1 - k/(d-1)}}
      \\&
      \leq%
      d \sum_{i=1}^{\tau}
      \frac{2^{i-1} \beta(d-1)}{\pth{2^{i}\eps/(4d)}^{1 - k/(d-1)}}
      \leq%
      \frac{4d^2\beta(d-1)}{\eps^{1 - k/(d-1)}} \sum_{i=1}^{\tau}
      \frac{2^{i-1}}{2^{i - ik/(d-1)}}
      \\&
      =%
      \frac{2d^2\beta(d-1)}{\eps^{1 - k/(d-1)}} \sum_{i=1}^{\tau}
      2^{  ik/(d-1)}%
      \leq%
      \frac{4d^3\beta(d-1)}{\eps^{1 - k/(d-1)}} \cdot 2^{\tau k/(d-1)}
      \\&%
      \leq%
      \frac{4d^3\beta(d-1)}{\eps^{1 - k/(d-1)}} \cdot
      \pth{\frac{256d^3}{\eps^{1/d}}}^{ k/(d-1)}
      \leq%
      \frac{1024d^6\beta(d-1)}{\eps^{1 - k/(d-1) + k/[d(d-1)]}}
      \\&
      =%
      \frac{1024d^6\beta(d-1)}{\eps^{1 - k/d}}.
    \end{align*}
    In particular, we obtain the recurrence
    $\beta(d) = 1024d^6 \beta(d-1)$, which solves to
    $\beta(d) = 2^{O(d-k)} d^{6(d-k-1)+1}$, as $\beta(k+1)=k+1$.
\end{proof}

\begin{lemma}
    For $k =0$, the constructed \knet{0} for volume measure has
    size at most $\frac{\psi(d)}{\eps} \log^{d-1} \frac{1}{\eps}$, where
    $\psi(d) = (\log d)^{O(d^2)}$.
\end{lemma}
\begin{proof}
    We follow the proof of \lemref{upper:bound:k:non:zero}.  Let
    $T(\eps,d)$ denote the size of a \knet{0} for volume measure over
    $[0,1]^d$ constructed above. We have $T(\eps, 1) \leq 3/\eps$.  So
    assume $d \geq 2$, and
    $T(\delta, d') \leq \frac{\psi(d')}{\delta} \log^{d'-1}
    \frac{1}{\delta}$ for all $d' < d$, where $\psi(d')$ is a function
    with $\psi(1)=3$.  As a reminder, we have
    \begin{math}
        \eps_i = {2^{i}\eps}/(4d)
    \end{math}
    and
    \begin{math}
        \tau
        \leq
        10 \log\frac{d}{\eps}.
    \end{math}
    By the inductive hypothesis, the above construction produces a
    \knet{0} for volume measure of size
    \begin{align*}
      \cardin{\KS}
      &%
        \leq%
        d \sum_{i=1}^\tau 2^{i-1}T(\eps_i, d-1)
        \leq%
        d \sum_{i=1}^{\tau} \frac{2^{i-1}
        \psi(d-1)}{\eps_{i}} \log^{d-2} \frac{1}{\eps_i}
      \\&
      \leq%
      d \sum_{i=1}^{\tau} \frac{4d \cdot 2^{i-1}
      \psi(d-1)}{2^{i}\eps} \log^{d-2} \frac{4d}{2^{i}\eps}
      \leq%
      4d^2
      \psi(d-1) \cdot \tau \cdot \frac{1}{\eps} \pth{
      10 \log \frac{d}{\eps}}^{d-2}
      \\&
      \leq%
      4d^2
      (10 \log d)^{d-1}
      \psi(d-1) \cdot \frac{1}{\eps}
      \log^{d-1} \frac{1}{\eps}.
    \end{align*}
    We
    obtain the recurrence
    \begin{math}
        \psi(d) \leq 4d^2 (10 \log d)^{d-1} \psi(d-1),
    \end{math}
    which solves to
    \begin{math}
        \psi(d) = ( \log d)^{O(d^2)}.
    \end{math}
\end{proof}

\begin{lemma}
    The constructed set $\KS$ is a \net{} for volume measure over
    $\HCX{d}$.
\end{lemma}

\begin{proof}
    Let $\Body$ be a convex body contained in $\HCX{d}$ with volume at
    least $\eps$. Assume, for the sake of contradiction, that $\Body$
    is not stabbed by any of the $k$-flats of $\KS$.  The constructed
    set being a net for the base cases of the construction ($d=k+1$ or
    $d=1$ and $k=0$) are immediate.

    So, let $h(\alpha)$ be the hyperplane orthogonal to the first axis
    which intersects the first axis at $\alpha \in \Re$. Define the
    functions
    \begin{equation*}
        f(\alpha) = \volX{\Body \cap h(\alpha)\bigr.}
        \qquad
        \text{and}
        \qquad
        g(\alpha) = f(\alpha)^{1/(d-1)}.
    \end{equation*}
    By the Brunn-Minkowski inequality, the function $g(\alpha)$ is
    concave and unimodal.  Define the point $x^* \in [0,1]$ so that
    $x^\star = \arg\max_{\alpha} f(\alpha)$.

    Let $V(\Delta) = f( x^\star + \Delta)$, and let
    $v(\Delta) = (V(\Delta))^{1/(d-1)}$. The function $v$, being a
    translation of $g$, is concave and unimodal.  Let $\kappa$ be the
    maximum index in $\IRX{\tau}$, such that $\eps_\kappa \leq V(0)$,
    see \Eqref{eps:i}.  Let $\rv_i\geq0$ be the maximum number such
    that $V(\rv_i) = \eps_i$, for $i=1,\ldots, \kappa$. As we can assume
    that $\Body$ is smooth, it is easy to verify the $\rv$s are well
    defined.

    Observe that if
    $\rv_i \geq 1/2^i$, then there is hyperplane orthogonal to the
    first axis that has a recursive construction of a net on it of
    level $i$, for $\eps_i$, that lies in the range
    $[x^\star, x^\star + 1/2^i]$. This by induction would imply that
    the net intersects $\Body$. We thus assume from this point on that
    \begin{equation*}
        \rv_i < \frac{1}{2^i},
    \end{equation*}
    for all $i$.  Observe that
    $\rv_1 \geq \rv_2 \geq \cdots \geq \rv_\kappa$, as
    $\eps_1 < \eps_2 < \cdots < \eps_\kappa$ (more specifically,
    $\eps_i = 2\eps_{i-1}$ for all $i$).

    \begin{figure}[h]
        \centering%
        \includegraphics{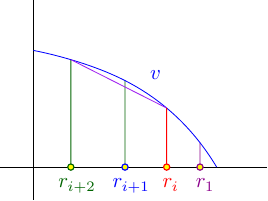}
        \caption{By the choice of
           $\rv_{\kappa} \leq \ldots \leq \rv_1$, we have
           $v(\rv_{\kappa}) \geq \ldots \geq v(\rv_1)$.}
        \figlab{concave}
    \end{figure}

    The concavity of $v(\cdot)$, see \figref{concave}, implies that
    \begin{align*}
      \frac{v(\rv_{i+2}) - v(\rv_{i+1})}
      {\rv_{i+2} - \rv_{i+1}}
      &\geq
        \frac{v(\rv_{i+1}) - v(\rv_{i})}
        {\rv_{i+1} - \rv_{i}}
      \qquad\implies\qquad%
      \frac{\rv_{i+1} - \rv_{i}}
      {\rv_{i+2} - \rv_{i+1}}
      \leq
      \frac{v(\rv_{i+1}) - v(\rv_{i})}
      {v(\rv_{i+2}) - v(\rv_{i+1})},
    \end{align*}
    as $\rv_{i+1} - \rv_{i} < 0$ and
    $v(\rv_{i+2}) - v(\rv_{i+1}) > 0$.  Since
    \begin{math}
        V(\rv_{i+1}) = \eps_{i+1} = 2\eps_i = 2V(\rv_i),
    \end{math}
    we have that
    \begin{math}
        v(\rv_{i+1}) = 2^{1/(d-1)}v(\rv_i).
    \end{math}
    For $i < \kappa$, let $\ell_{i} = \rv_{i} - \rv_{i+1}$. Plugging
    this into the above, observe
    \begin{align*}
      \frac{\ell_{i}}{\ell_{i+1}}
      &=%
        \frac      {\rv_{i} - \rv_{i+1}}
        {\rv_{i+1} - \rv_{i+2}}
        \leq
        \frac{v(\rv_{i+1}) - v(\rv_{i})}
        {v(\rv_{i+2}) - v(\rv_{i+1})}
      =%
      \frac{(2^{1/(d-1)}-1) v(\rv_i)}
      {2^{1/(d-1)}(2^{1/(d-1)}-1 )v(\rv_{i})}
      =%
      \frac{1}{2^{1/(d-1)}}.
    \end{align*}
    Since $\ell_{\kappa-1} \leq \rv_{\kappa-1} \leq 1/2^{\kappa-1}$, we
    have
    \begin{align*}
      \rv_1
      &=%
        \rv_\kappa + \sum_{i=1}^{\kappa-1} \ell_i
        \leq%
        \rv_\kappa +
        \ell_{\kappa-1}\pth{1 + \frac{1}{2^{1/(d-1)}} +
        \frac{1}{2^{2/(d-1)}} + \cdots }
        \\&%
      \leq%
      \rv_\kappa + 2 d \ell_{\kappa-1}
      \leq%
      (2d+1)
      \rv_{\kappa-1}
      \leq%
      \frac{2d+1}{2^{\kappa-1}}
      <
      \frac{\eps^{1/d}}{4d^2},
    \end{align*}
    as $\kappa \leq \tau$, and by the value of $\tau$, see
    \Eqref{tau:value}.

    Let $I_1$ be the maximum interval, where the value of
    $V(x) \geq \eps_1$ for any $x \in I_1$. By the above, we have
    that if the net does not intersect $\Body$, then
    $\lenX{I_1} \leq 2\rv_1 \leq 2{\eps^{1/d}}/(4d^2)$.

    We define $I_2, \ldots, I_d$ in a similar fashion on the other
    axes, and the same argumentation would imply that
    $\lenX{I_j} \leq 2{\eps^{1/d}}/(4d^2)$, for all $j$. Furthermore,
    any plane orthogonal to the axes that avoids the box
    $B = I_1 \times I_2 \cdots \times I_d$ has an intersection with
    $\Body$ of volume at most $\eps_1$. We conclude that the total
    value of $\Body$ is at most
    \begin{equation*}
        \volX{\Body}%
        \leq%
        \volX{B} +
        \sum_{j=1}^d \int_{y \in [0,1] \setminus I_j}
        \volX{\Body \cap (x_j = y)\Bigr.} dy
        \leq
        \prod_{j=1}^d \lenX{I_j} + d \eps_1
        <%
        \eps,
    \end{equation*}
    which is a contradiction to $\volX{ \Body} \geq \eps$.
\end{proof}

Putting the above together, we get our main result.
\begin{theorem}
    \thmlab{ke-nets-opt}%
    Given $\eps \in (0,1)$ and $k \in \{1,\ldots, d-1\}$, the above is
    a deterministic and explicit construction of a \net{} for volume
    measure over $[0,1]^d$ of size
    $\beta(d) /\eps^{1 - k/d} = O_d(1/\eps^{1 - k/d})$, where
    $\beta(d) = 2^{O(d-k)} d^{6(d-k-1)+1}$.

    For $k=0$, the above construction has size
    $\frac{\psi(d)}{\eps} \log^{d-1} \frac{1}{\eps}$, where
    $\psi(d) = (\log d)^{O(d^2)}$.
\end{theorem}

\begin{remark}
    (A) Our upper bound for the case of points matches the lower bound
    $\Omega_d(\frac{1}{\eps} \log^{d-1} \frac{1}{\eps})$ of Bukh \etal
    \cite{bmn-lbwensc-09} (which holds for somewhat different
    settings). This seems to be somewhat coincidental, as the
    $\eps$-net theorem implies, in this case, a smaller weak
    $\eps$-net for volume measure of size
    $O(\tfrac{d}{\eps} \log \frac{1}{\eps})$, via the reduction to
    ellipsoids, see\lemref{reduce-to-pts}.

    \medskip

    (B) The construction here is orthogonal in nature.  For the case
    of $(0,\eps)$-nets, the generated set is significantly larger than
    the Halton-Hammersely set (see \defref{hh} and
    \lemref{0-net-box-d}) which works for axis-aligned boxes. General
    convex bodies do not have the same predictable ``behavior'' of
    axis-aligned boxes, thus maybe explaining the need for a larger
    net.
\end{remark}

\section{Conclusions}

The main open problem left by our work is bounding the size of \net{}s in
the general case. That is, the input is a set $\PS$ of $n$ points in
$\Re^d$, and we would like to compute a minimum set of $k$-flats which
stab all convex bodies containing at least $\eps n$ points of $\PS$.
As noted earlier, there is a \net{} of asymptotically
the same size as of a weak $\eps$-net in $\Re^{d-k}$. This follows by
projecting the point set to a subspace of dimension $d-k$, constructing
a regular weak $\eps$-net, and lifting the net back to the original space.
Can one do better than this somewhat \naive construction?

Note that it is easy to show a lower bound of size $\Omega(1/\eps)$
for \knet{1}{}s in the general case. Take a point set that consists of
$\ceil{2/\eps}$ equally sized clusters of tightly packed points, such
that no line passes through three clusters. Namely, our sublinear
results in $1/\eps$ are special for the uniform measure on the
hypercube.

\paragraph*{Acknowledgements.} %
We thank an anonymous reviewer for sketching an improved construction
of \net{}s for $k \geq 1$, which led to \thmref{ke-nets-opt}.  Our
previous construction had an additional $\log$ term. We also thank the
anonymous reviewers for detailed comments that improved the
paper.

\BibLatexMode{\printbibliography}

\appendix

\section{\TPDF{\knet{0}{}s}{(0,eps)-nets} for %
   axis-aligned boxes}
\apndlab{0-net-boxes}

Here we show the existence of a \knet{0} of size $O(1/\eps)$ that
intersects any axis-aligned box $B$ that has the property that
$\volX{B \cap \HCX{2}} \geq \eps$. The following constructions are
essentially described in \cite{m-gd-99} (in the context of
low-discrepancy point sets), however the proofs use similar tools. We
give the proofs for completeness.

\begin{definition}[the Van der Corput set]
    \deflab{vdc}%
    For an integer $\alpha$, let $\bin(\alpha) \in \set{0,1}^\star$
    denote the binary representation of $\alpha$, and
    $\rev(\bin(\alpha))$ be the reversal of the string of digits in
    $\bin(\alpha)$. We define $\br(\alpha) \in [0,1]$ to be the
    \emphi{bit-reversal} of $\alpha$, which is defined as the number
    obtained by concatenating ``$0.$'' with the string
    $\rev(\bin(\alpha))$. For example, $\br(13) = 0.1011$. Formally,
    if $\alpha = \sum_{i=0}^\infty 2^i b_i$ with $b_i \in \set{0,1}$,
    then $\br(\alpha) = \sum_{i=0}^\infty b_{i}/2^{i+1}$.

    For an integer $n$, the \emphi{Van der Corput set} is the
    collection of points $\pp_0, \ldots, \pp_{n-1}$, where
    $\pp_i = (i/n, \br(i))$. See \figref{vdc}.
\end{definition}

\begin{figure}
    \centering %
    \includegraphics[scale=0.55]{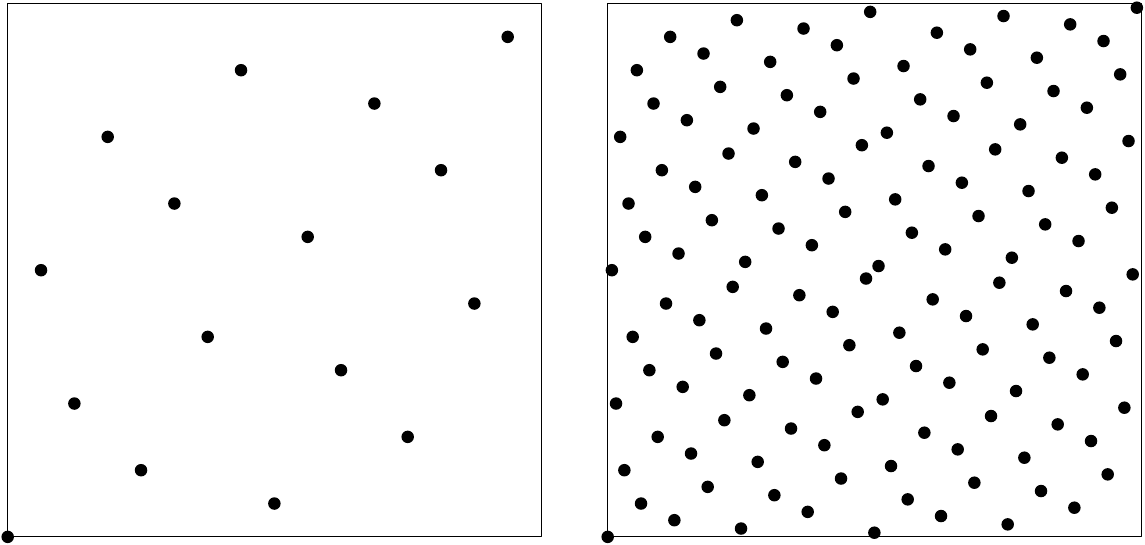}
    \caption{The Van der Corput set with $n = 16$ (left) and $n = 128$
       (right).}
    \figlab{vdc}
\end{figure}

\begin{lemma}
    \lemlab{0-net-box} For a parameter $\eps \in (0,1)$,there is a
    collection of $O(1/\eps)$ points $\PS \subset \HCX{2}$ such that
    any axis-aligned box $B$ with $\volX{B \cap \HCX{2}} \geq \eps$
    contains a point of $\PS$.
\end{lemma}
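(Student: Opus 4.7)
The plan is to take the Van der Corput set $\Net = \set{(i/n, \br(i)) : 0 \leq i < n}$ with $n = 2^m$ and $m = \ceil{\lg(16/\eps)}$, so that $\cardin{\Net} = n = O(1/\eps)$. The crux is the classical $(0,m,2)$-net property: for every pair of non-negative integers $k, \ell$ with $k + \ell = m$ and every $a \in \set{0, \ldots, 2^k-1}$, $b \in \set{0, \ldots, 2^\ell-1}$, the dyadic \emph{elementary box}
\begin{equation*}
    E = [a/2^k, (a+1)/2^k) \times [b/2^\ell, (b+1)/2^\ell)
\end{equation*}
contains exactly one point of $\Net$. I expect this combinatorial property to be the main (essentially the only) obstacle, though its proof becomes clean once the bit-reversal is unpacked.

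To establish the net property, I would write $i = \sum_{j=0}^{m-1} b_j \, 2^j$, so that $\br(i) = \sum_{j=0}^{m-1} b_j/2^{j+1}$. The condition $i/n \in [a/2^k, (a+1)/2^k)$ is equivalent to the top $k$ bits of $i$ encoding $a$, and the condition $\br(i) \in [b/2^\ell, (b+1)/2^\ell)$ is equivalent to the bottom $\ell$ bits of $i$ encoding $\rev(b)$. Since $k + \ell = m$, these two constraints jointly determine all $m$ bits of $i$, so there is a unique index $i \in \set{0, \ldots, n-1}$ with $(i/n, \br(i)) \in E$.

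Next, given an axis-aligned box $B \subseteq \HCX{2}$ with $\volX{B} \geq \eps$ (replacing $B$ by $B \cap \HCX{2}$ upfront), let $w$ and $h$ be its width and height, so $wh \geq \eps$. Set $k = \ceil{\lg(2/w)}$ and $\ell = \ceil{\lg(2/h)}$, which gives $1/2^k \leq w/2$ and $1/2^\ell \leq h/2$. This slack is enough for $B$ to contain a dyadic rectangle $D$ of dimensions $1/2^k \times 1/2^\ell$: indeed, any interval of length at least $2/2^k$ contains a dyadic interval of length $1/2^k$, and similarly for the other axis. A short calculation yields $2^{k+\ell} < 16/(wh) \leq 16/\eps \leq 2^m$, so $k + \ell \leq m$.

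Finally, I would subdivide $D$ into $2^{m-k-\ell}$ congruent elementary boxes of dimensions $1/2^k \times 1/2^{m-k}$ (noting $m-k \geq \ell$). Each has scale parameters summing to $m$, so by the key property each contains exactly one point of $\Net$; in particular $B \supseteq D$ contains at least one point of $\Net$, which establishes the lemma.
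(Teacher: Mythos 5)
Your proposal is correct, and it uses the same underlying object (the Van der Corput set) as the paper, but the correctness argument is packaged differently. The paper takes $n = \ceil{4/\eps}$ (no power-of-two requirement), snaps only the \emph{$y$-extent} of $B$ to a dyadic interval $I$ of length $2^{-q}$ comparable to $h$, and then observes that the points of the net lying in the strip $[0,1]\times I$ are exactly those whose index is fixed modulo $2^q$, so their $x$-coordinates are equally spaced with gap $2^q/n \leq w$; hence $B$ catches one of them directly, with no dyadic snapping in the $x$-direction. You instead take $n = 2^m$ and first prove the stronger exact equidistribution statement (the classical $(0,m,2)$-net property: every dyadic box of area $2^{-m}$ with the two scales summing to $m$ contains exactly one point), then snap \emph{both} extents of $B$ to dyadic intervals and subdivide. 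Your bit-counting proof of the elementary-box property is right (top $k$ bits of $i$ fixed by the $x$-constraint, bottom $\ell$ bits fixed by the $\br(i)$-constraint), the arithmetic $2^{k+\ell} < 16/(wh) \leq 2^m$ is fine, and the subdivision step is sound, so the lemma follows with $n \leq 32/\eps = O(1/\eps)$. What each route buys: yours isolates a reusable, stronger structural fact about the Van der Corput set at the cost of requiring $n$ to be a power of two and paying an extra constant factor for dyadic snapping in both coordinates; the paper's spacing argument is slightly leaner (better constant, arbitrary $n$) and is the version that generalizes in the appendix to Halton--Hammersley sets in higher dimensions via the Chinese remainder theorem, where the "fixed residue class implies evenly spaced last coordinate" phrasing carries over verbatim.
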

\begin{proof}
    Let $n = \ceil{4/\eps}$. We claim that the Van der Corput set of
    size $n$ is the desired point set $\PS$.

    Let $B$ be a box contained in $\HCX{2}$ of width $w$ and height
    $h$, with $wh \geq \eps$. Let $q \geq 2$ be the smallest integer
    such that $1/2^q < h/2 \leq 1/2^{q-1}$. By the choice of $q$, the
    projection of $B$ onto the $y$-axis contains an interval of the
    form $I = [k/2^q, (k+1)/2^q)$ for some integer $k$. Let
    $B_I = B \cap \Set{(x,y) \in \HCX{2}}{y \in I}$ be the box
    restricted to $I$ along the $y$-axis. Observe that
    \begin{align*}
      \volX{B_I} = w/2^q = w/(4 \cdot 2^{q-2}) \geq wh/4 \geq \eps/4
      \iff
      w \geq 2^q \eps/4.
    \end{align*}

    Let $S = [0,1] \times I$, so that each $\pp_j \in \PS \cap S$ has
    $\br(j) \in I$. In particular, the first $q$ binary digits of
    $\br(j)$ are fixed. This implies that the $q$ least significant
    binary digits of $j$ are fixed. In other words, $\PS \cap S$
    contains all points $p_j$ such that $j \equiv \ell \pmod{2^q}$ for
    some integer $\ell$---the $x$-coordinates of the points in $\PS$
    are regularly spaced in the strip $S$ with distance $2^q/n$. If
    the width of $B_I$ is at least $2^q/n$, then this implies that $B$
    contains a point of $\PS$ in the strip $S$.  Indeed, by the choice
    of $n$, $2^q/n \leq 2^q \eps/4 \leq w$.
\end{proof}

By extending the definition of the Van der Corput set to higher
dimensions, the above proof also generalizes.

\begin{definition}[the Halton-Hammersely set]
    \deflab{hh}%
    For a prime number $\rho$ and an integer
    $\alpha = \sum_{i=0}^\infty \rho^i b_i$, with
    $b_i \in \set{0, \ldots, \rho-1}$, written in base $\rho$, define
    $\br_\rho(\alpha) = \sum_{i=0}^\infty b_i/\rho^{i+1}$.  Note that
    $\br_2 = \br$ from \defref{vdc}.

    For integers $n$ and $d$, the \emphi{Halton-Hammersely set} is the
    collection of points
    \begin{equation*}
        p_1, \ldots, p_{n-1},
    \end{equation*}
    where
    $p_i = (\br_{\rho_1}(i), \br_{\rho_2}(i), \ldots,
    \br_{\rho_{d-1}}(i), i/n)$, and $\rho_1, \ldots, \rho_{d-1}$ are
    the first $d-1$ prime numbers. (Making $i/n$ the $d$\th coordinate
    instead of the 1st coordinate simplifies future notation.)
\end{definition}

\begin{lemma}
    \lemlab{0-net-box-d}%
    For a parameter $\eps \in (0,1)$, there is a collection of
    $\displaystyle {2^{O(d\log d)}}/{\eps}$ points
    $\PS \subset \HCX{d}$ such that any axis-aligned box $B$ with
    $\volX{\bigl.\smash{B \cap \HCX{d}}} \geq \eps$ contains a point
    of $\PS$.
\end{lemma}
\begin{proof}
    The proof is similar to \lemref{0-net-box}, with the Chinese
    remainder theorem as the additional tool.

    Let $n = \ceil{\smash{(2^{d-1}/\eps)\cdot(d-1)\sharp}\bigr.}$,
    where $k\sharp$ is the \emphi{primorial} function, defined as the
    product of the first $k$ prime numbers. It is known that
    $k\sharp \leq \exp\pth{(1 + o(1))k\log k}$, which implies
    $n = 2^{O(d\log d)}/\eps$.  We claim that the Halton-Hammersely
    set of size $n$ is the desired point set $\PS$.

    Denote the side lengths of the box $B$ by $s_1, \ldots, s_d$, with
    $\prod_{i=1}^d s_i \geq \eps$. For each $i = 1, \ldots, d-1$, let
    $q_i$ be the smallest integer such that
    $1/\rho_i^{q_i} < s_i/2 \leq 1/\rho_i^{q_i-1}$, where $\rho_i$ is
    the $i$\th prime number. By the choice of $q_i$, the projection of
    $B$ onto the $i$\th axis contains an interval of the form
    $I_i = [k_i/\rho_i^{q_i}, (k_i+1)/\rho_i^{q_i}]$ for some integer
    $k_i$.  Let $S$ denote the box
    $I_1 \times \ldots \times I_{d-1} \times [0,1]$ and
    $B_S = B \cap S$. Observe that
    \begin{align*}
      \volX{B_S}
      = s_d \prod_{i=1}^{d-1} \frac{1}{\rho_i^{q_i}}
      \geq s_d \prod_{i=1}^{d-1} \frac{s_i}{2\rho_i}
      \geq \frac{\eps}{2^{d-1}} \prod_{i=1}^{d-1} \frac{1}{\rho_i}
      \iff
      s_d \geq \frac{\eps}{2^{d-1}} \prod_{i=1}^{d-1} \rho_i^{q_i-1}.
    \end{align*}

    Similar to \lemref{0-net-box}, we observe that the point
    $p_j \in P$ falls into $S$ when
    $j \equiv \ell_i \pmod{\rho_i^{q_i}}$ for some integers
    $\ell_1, \ldots, \ell_{d-1}$. By the Chinese remainder theorem,
    there is exactly one number in the set
    $\set{0, 1, \ldots, \prod_{i=1}^{d-1} \rho_i^{q_i}-1}$ (the $d$\th
    coordinate of $p_j$) which satisfies these $d-1$ equations. In
    particular, the points in $\PS \cap S$ are spaced regularly along
    the $d$\th axis with distance
    $\delta = (1/n)\prod_{i=1}^{d-1} \rho_i^{q_i}$.  Once again, we
    argue that the length of $B$ along the $d$\th axis is at least
    $\delta$, which implies the result.  Indeed, by our choice of $n$
    we have that,
    \begin{align*}
      \delta = \frac{1}{n}\prod_{i=1}^{d-1} \rho_i^{q_i}
      \leq
      \frac{\eps}{2^{d-1}} \prod_{i=1}^{d-1} \rho_i^{q_i-1}
      \leq s_d.
    \end{align*}
\end{proof}

\end{document}